\newcommand{\bC}{{\mathbb C}}
\newcommand\CC{\mathbb{C}}
\newcommand\RR{\mathbb{R}}
\newcommand\ZZ{\mathbb{Z}}
\newcommand\NN{\mathbb{N}}
\newtheorem{theorem}{Theorem}[section]
\newtheorem{lemma}[theorem]{Lemma}
\newcommand\be{\begin{equation}}
\newcommand\ee{\end{equation}}
\newcommand\bes{\begin{eqnarray}}
\newcommand\ees{\end{eqnarray}}
\newcommand\ben{\begin{eqnarray*}}
\newcommand\een{\end{eqnarray*}}
\newcommand\non{\nonumber}
\newcommand\Ra{\Rightarrow}
\newcommand{\eq}[1]{Eq. \eqref{#1}}
\newcommand{\eqs}[2]{Eqs. \eqref{#1} and \eqref{#2}}
\newcommand{\sect}[1]{Section \ref{#1}}
\newcommand{\lem}[1]{Lemma~\ref{#1}}
\newcommand{\thm}[1]{Theorem~\ref{#1}}
\newcommand{\half}{\frac{1}{2}}
\newcommand{\hf}[1]{\frac{#1}{2}}
\newcommand{\inv}[1]{\frac{1}{#1}}
\newcommand\res{{\rm Res}}
\newcommand{\intab}{\int_a^b}
\newcommand\diff[2]{\frac{\textrm{d}{#1}}{\textrm{d}{#2}}}
\newcommand\diffp[2]{\frac{\partial{#1}}{\partial{#2}}}
\begin{document}

\title{Zeta Function on Surfaces of Revolution}
\author{Thalia D. Jeffres}
\address{Department of Mathematics, Statistics, and Physics\\
Wichita State University\\
1845 Fairmount, Box 33\\
Wichita, KS 67260-0033\\USA}
\email{jeffres@math.wichita.edu}

\author{Klaus Kirsten}
\address{Department of Mathematics\\ Baylor University\\
         Waco\\ TX 76798\\USA }
\email{Klaus$\_$Kirsten@baylor.edu}

\author{Tianshi Lu}
\address{Department of Mathematics, Statistics, and Physics\\
Wichita State University\\
1845 Fairmount, Box 33\\
Wichita KS 67260-0033\\USA}
\email{lu@math.wichita.edu}

\begin{abstract}
In this paper we applied the contour integral method for the zeta function associated with a differential operator to the Laplacian on a surface of revolution. Using the WKB expansion, we calculated the residues and values of the zeta function at several important points. The results agree with those obtained from the heat kernel expansion. We also obtained a closed form formula for the determinant of the Laplacian on such a surface.
\end{abstract}
\maketitle

\section{Introduction}
Spectral functions, namely functions of the spectrum of, usually, second order differential operators play an important
role in mathematics and physics. The possibly most important such spectral function is the zeta function which directly relates to topics such as
analytic torsion \cite{ray71-7-145}, the heat kernel \cite{gilk95b,vass03-388-279}, Casimir energy \cite{bord09b,bord01-353-1,cogn01-34-7311,eliz95b,milt01b}, effective
actions and quantum tunneling \cite{buch92b,byts96-266-1,cole77-15-2929,cole78-58-211,dowk76-13-3224,espo97b,hawk77-55-133,schu81b} and the study of critical metrics \cite{burg94-120-440,okik05-17-105,osgo88-80-148}. The
relationship between the zeta function and all these topics is established by the fact that very specific properties of the zeta function encode
the information needed in these areas. For example residues and values at specific points contain information about the short time heat kernel asymptotics
with all the geometric information it contains, the (finite part of its) value at $s=-1/2$ relates to the Casimir energy, and its derivative at $s=0$ is relevant in all the other subjects
mentioned above; see, e.g., \cite{kirs02b}. It is therefore no surprise that an enormous effort is made to understand spectral zeta functions as completely as possible.

The situation is particularly satisfying for the one-dimensional case where functional determinants can be obtained in closed form as boundary values of a suitable initial value problem
\cite{burg91-138-1,burg93-16-496,drey78-45-15,form87-88-447,form92-147-485,gelf60-1-48,levi77-65-299}. A contour integral approach established in
\cite{kirs03-308-502,kirs04-37-4649} was used to rederive this case and generalize it to arbitrary boundary conditions. This approach has the advantage that it allows for the systematic study of higher
dimensional configurations as long as the Laplacian separates in a suitable fashion. Examples are Laplacians with spherically symmetric potentials \cite{dunn06-39-11915,dunn09-42-075402} and
Laplacians for spherical suspensions \cite{fuccunp}. In those cases the analysis of the associated zeta function is based on explicitly known special functions, namely Bessel functions
and Legendre functions.

Another example, studied in the present paper, is the Laplacian on a compact surface of revolution.
These surfaces are geometrically interesting enough to exhibit many features that appear in the study of spectral functions and their
asymptotics, while at the same time the rotational symmetry allows the explicit calculations that are necessary in order to understand
meromorphic extensions of the zeta function to regions outside of the domain of convergence. The complication with this example is that
no known special function emerges as a solution to the relevant initial value problem because the function $f$ whose graph is used to generate the
surface of revolution is kept fairly general. As a result needed asymptotic behavior has to be determined from scratch starting from an ordinary
differential equation. The WKB method is our method of choice \cite{bend10b,mill06b} and it allows us to obtain
the values or residues of the zeta function at several important points, in addition to the determinant of the Laplacian.
The relation between the values of the zeta function and the global geometrical properties of the surface is revealed.
Furthermore it is notable that the result for the determinant is a closed-form formula.

The structure of the paper is as follows. In \sect{zeta} we present the geometric preliminaries, define the generalized zeta function,
and reformulate it as sum of contour integrals. The zeta function is written as the sum of a one-dimensional term and a two-dimensional term, which are calculated in \sect{1D} and \sect{2D} respectively.
The values and residues of the zeta function at several selected points are calculated along with the determinant of the Laplacian in \sect{2D}.
We conclude in \sect{discuss} with a comparison of our method and other approaches.

\section{Zeta function for a surface of revolution}\label{zeta}

\subsection{Laplacian on a surface of revolution}
We consider the Laplacian on a surface of revolution with Dirichlet boundary values. Let $f\in C^2(a,b)$ be a positive function from $[a,b]$ to $\RR$, and let $M$ be the compact surface with boundary that is generated by revolving the graph of $f$ around the $x$-axis. With the metric induced by the Euclidean metric of $\RR^3$, $(M,g)$ becomes a Riemannian manifold. In the natural coordinates $(x,\theta)$, the metric is
$$ \left[ \begin{array}{ll}
          1+ f'(x)^{2} & 0 \\
          0 & f^{2} (x)
          \end{array} \right] .$$
Recalling the formula for the Christoffel symbols in terms of the metric,
$$ \Gamma_{ij}^k = \half\sum _{l=1}^2 g^{kl} \left(\frac{\partial g_{il}}{\partial x_j} + \frac{\partial g_{jl}}{\partial x_i}
 - \frac{\partial g_{ij}}{\partial x_l}\right), $$
we have
$$ \Gamma_{11}^1 = \frac{f'f''}{1+ f'^2},\ \Gamma_{12}^2 = \Gamma_{21}^2 = \frac{f'}{f},\ \Gamma_{22}^1 = \frac{-ff'}{1+f'^2}, $$
and the others are zero. We then calculate that
\be\label{lap}
\Delta u = \frac{1}{1+f'^2} \left[\frac{\partial^2u}{\partial x^2} + (\frac{f'}{f}-\frac{f'f''}{1+f'^2})\frac{\partial u}{\partial x}
 + \frac{1+f'^2}{f^2} \frac{\partial^2u}{\partial\theta^2}\right].
\ee
The eigenvalue equation is $\Delta u = -\lambda u$. The following facts are standard \cite{gall04b,scho94b}:
\begin{enumerate}
\item The eigenvalues are real and positive; if labeled and ordered as $0 <\lambda_1\leq\lambda_2\leq\ldots$, then $\lambda_k\to\infty$.
\item For each $\lambda_k$, the corresponding eigenspace in $L^2(M)$ is finite-dimensional.
\item The spectrum of $M$ is discrete.
\end{enumerate}
We remark here that zero is not an eigenvalue of the Laplacian if Dirichlet boundary values are taken.

\subsection{Generalized zeta function}

Our spectral function for the Laplacian generalizes the well-known Riemann zeta function,
\be\label{rzeta}
\zeta_R(s) = \sum_{k=1}^\infty k^{-s},
\ee
which converges for $\Re(s)>1$. For the Laplacian on a surface of revolution with Dirichlet boundary condition, if the eigenvalues are labeled and ordered as $0<\lambda_1\leq \lambda_2\leq\ldots$, the generalized zeta function is defined to be
\be\label{gzeta}
\zeta(s) = \sum_{\lambda_k \in {\rm Spec}(\Delta)} \lambda_k^{-s},
\ee
for those values of $s\in\bC$ for which this sum converges. Weyl's asymptotic formula can be used to determine the domain of convergence of \eq{gzeta} \cite{chav84b}. On a compact Riemannian manifold $M$ of dimension $n$,
$$\lambda_k^{\hf{n}} \sim \frac{2^n\pi^{\hf{n}}\Gamma(\hf{n}+1)}{{\rm Vol}(M)}k,\ k\to\infty.$$
From this we determine that the zeta function is defined for complex values of $s$ having $\Re(s)>n/2$, in our case $\Re(s)>1$. However, just as for the Riemann zeta function, much information of geometric and physical significance is contained in $\zeta(s)$ for complex values of $s$ outside of this region. $\zeta(s)$ has a simple pole at $s=1$. The holomorphic continuation of $\zeta(s)$ to $\Re(s)<1$ turns out to have simple poles at $s=\half-n$, $n\in\NN$. Here we obtain expressions for the residues of $\zeta(s)$ at $s=1,\half,-\half$, and its value at $s=0$, in terms of geometrical properties of the surface. Furthermore we compute
the determinant of the Laplacian using the relation with the zeta function given by \cite{dowk76-13-3224,hawk77-55-133,ray71-7-145}
\be\label{det}
\ln\det\Delta = -\zeta'(0).
\ee
We obtain a formula for $\zeta'(0)$ given entirely in terms of the function $f$ whose graph generates the surface of revolution.

\subsection{Reformulation of the zeta function as a contour integral}

A fundamental difficulty in studying generalized zeta functions is that the eigenvalues are not known explicitly except for special cases. However, as described in Ref. \cite{kirs02b,kirs08-76-60}, the zeta function can be reformulated as a contour integral, which is highly suitable to such investigations. Here we adapt this method to the Laplacian on a surface of revolution.

The rotational symmetry of $M$ suggests the use of separation of variables to describe the eigenvalues of the Laplacian. The eigenfunctions of the Laplacian on the cross-section $S^1$, denoted by $u_k(\theta)$, $k\in\ZZ$, are $e^{ik\theta}$, with eigenvalue $-k^2$. Let $u(x,\theta) = \phi(x)u_k(\theta)$. By \eq{lap}, $u$ is an eigenfunction of $\Delta$ on $M$ with eigenvalue $\lambda$ if and only if $\phi$ is a nontrivial solution to
\be\label{phi-ode}
\phi''(x) + \left(\frac{f'}{f} - \frac{f' f''}{1+f'^2}\right)\phi'(x) + \left(\lambda-\frac{k^2}{f^2}\right)(1+f'^2)\phi(x) = 0,
\ee
satisfying the boundary conditions
$$ \phi(a)=0,\ \phi(b)=0. $$
Denote by $\phi_k(\lambda;x)$ the solution to the initial value problem consisting of the same differential equation (\ref{phi-ode}), but with initial conditions
\be\label{phi-ini}
\phi(a)=0,\ \phi'(a)=1.
\ee
Then for each integer $k$, $\lambda\in\CC$ is an eigenvalue if and only if $\phi_k(\lambda;b)=0$. For each fixed integer $k$, order the eigenvalues and denote them by $0<\lambda_{k,1}\leq\lambda_{k,2}\leq\ldots$. Since $\phi_k(\lambda;b)$, regarded as a function of $\lambda$, vanishes at and only at $\lambda_{k,n}$, $n\in\ZZ^+$, its canonical product representation \cite{ahlf79b} is
$$\phi_k(\lambda;b)=\phi_k(0;b)\prod_{n=1}^\infty \left(1-\frac{\lambda}{\lambda_{k,n}}\right).$$
The convergence of the infinite product is guaranteed by Weyl's asymptotic formula. Let
$$D_k(\lambda)=\frac{\phi_k(\lambda;b)}{\phi_k(0;b)}=\prod_{n=1}^\infty \left(1-\frac{\lambda}{\lambda_{k,n}}\right).$$
By the Residue Theorem, we may express $\zeta(s)$ as the sum of contour integrals.
\be\label{contour}
\zeta(s) = \inv{2\pi i}\int_\Gamma\lambda^{-s}\diff{\ln D_0(\lambda)}{\lambda}d\lambda
 + 2\sum_{k=1}^\infty\inv{2\pi i}\int_\Gamma\lambda^{-s}\diff{\ln D_k(\lambda)}{\lambda}d\lambda,
\ee
where $\Gamma$ is any counterclockwise contour that encloses all the $\lambda_{k,n}$'s but not the origin.
The factor 2 in the second term above comes from the fact that $\lambda_{k,n}=\lambda_{-k,n}$. Following Ref. \cite{kirs08-76-60} (Fig. 3),
the contour can be chosen to go around the negative axis, from $-\infty$ to 0 above the real axis, along which $\lambda=e^{i\pi}y$, and from 0 to $-\infty$ below the real axis, along which $\lambda=e^{-i\pi}y$. Using this contour, \eq{contour} is simplified to be
\be\label{master}
\zeta(s) = \zeta_1(s)+\zeta_2(s),
\ee
where
\be\label{zeta1}
\zeta_1(s)=\frac{\sin\pi s}{\pi}\int_0^\infty y^{-s}\diff{\ln D_0(-y)}{y}dy,
\ee
\be\label{zeta2}
\zeta_2(s)=2\frac{\sin\pi s}{\pi}\sum_{k=1}^\infty \int_0^\infty y^{-s}\diff{\ln D_k(-y)}{y}dy.
\ee
Note, that Eq. (\ref{zeta2}) should be understood in the sense that the analytical continuation of the integral is constructed first
and the summation of that result is performed afterwards.

Next, in \sect{1D}, we will use the WKB method to solve \eq{phi-ode} with $k=0$, and compute important values of $\zeta_1(s)$ and $\zeta'_1(0)$. In \sect{2D} we will adapt the method to the series over $k$, and compute $\zeta_2(s)$ along with $\zeta'_2(0)$.

\section{Zeta function and derivative with $k=0$}\label{1D}
By Weyl's asymptotic formula \eq{zeta1} converges for $\Re s>\half$. To obtain the meromorphic extension of $\zeta_1(s)$ to $\Re s<\half$, we apply the WKB method to determine the asymptotics of $\phi_0(-y;b)$; see for example \cite{bend10b,mill06b} for a discussion of the WKB method. First, we make the change of variables $y=z^2$ in the differential equation (\ref{phi-ode}) with $k=0$. It becomes
\be\label{phi1}
\phi''(x) + \left(\frac{f'}{f} -\frac{f'f''}{1+f'^2}\right)\phi'(x) - z^2(1+f'^2)\phi(x) = 0.
\ee
Denote the coefficient of $\phi'(x)$ in \eq{phi1} by $u(x)$,
$$ u(x) = \frac{f'}{f} - \frac{f'f''}{1+f'^2}.$$
Let
\be\label{Sdef}
\phi(x)=\exp\left(\int_a^x[S(z;\sigma)-\hf{u(\sigma)}]d\sigma\right).
\ee
Substituted into \eq{phi1}, the equation for $S(z;x)$ is
$$ S'(z;x)+S(z;x)^2=(1+f'^2)z^2+\frac{u^2}{4}+\hf{u'},$$
where the prime indicates differentiation with respect to $x$.
The solution can be written as asymptotic expansions in $z$. It has two branches, namely,
\be\label{Spm}
S_\pm(z;x)=\pm s_{-1}z+s_0\pm\frac{s_1}{z}+\ldots=\pm(s_{-1}z+\frac{s_1}{z}+\ldots)+(s_0+\frac{s_2}{z^2}+\ldots),
\ee
where the coefficients $s_n(x)$ can be computed recursively. The first three coefficients are
\be\label{sf1}
s_{-1}=\sqrt{1+f'^2},\ s_0=-\frac{f'f''}{2(1+f'^2)},\ s_1=-\inv{8}\frac{f'^2}{f^2\sqrt{1+f'^2}}+\inv{4}\frac{f''}{f(1+f'^2)^{3/2}}.
\ee
Applying the initial condition $\phi(a)=0$ and $\phi'(a)=1$, the solution to \eq{phi1} is
\begin{eqnarray}
\phi_0(-z^2;x)&=&e^{-\half\int_a^xu(\sigma)d\sigma}\frac{e^{\int_a^xS_+(z;\sigma)d\sigma}-e^{\int_a^xS_-(z;\sigma)d\sigma}}{S_+(z;a)-S_-(z;a)}\nonumber\\
&=&e^{\int_a^x[S_+(z;\sigma)-\half u(\sigma)]d\sigma}\frac{1-e^{-\int_a^x[S_+(z;\sigma)-S_-(z;\sigma)]d\sigma}}{S_+(z;a)-S_-(z;a)}.\nonumber\end{eqnarray}
Let $S_\pm=\pm S_1+S_2$, where
\be\label{S12}
S_1(z;x)=s_{-1}z+\frac{s_1}{z}+\cdots,\ \ \ S_2(z;x)=s_0+\frac{s_2}{z^2}+\cdots.
\ee
Notice that
$$S'_++S_+^2=S'_-+S_-^2\Ra S_1'+2S_1S_2=0\Ra \ln S_1(z;a)-\ln S_1(z;b)=2\int_a^bS_2(z;x)dx.$$
$\phi_0(-z^2;b)$ can be written as
\be\label{phi1S1}
\phi_0(-z^2;b)=e^{\int_a^b[S_1(z;x)-\half u(x)]dx}\frac{1-e^{-\int_a^b2S_1(z;x)dx}}{2\sqrt{S_1(z;a)S_1(z;b)}}.
\ee
Now we apply \eq{phi1S1} to the contour integral \eq{zeta1}. Noticing that \eq{Spm} is invalid as $z\to0$, we write the zeta function as
\be\label{zeta1z}
\zeta_1(s)=\frac{\sin\pi s}{\pi}\left[\int_0^1 y^{-s}\diff{\ln D_0(-y)}{y}dy+\int_1^\infty z^{-2s}\diff{\ln D_0(-z^2)}{z}dz\right].
\ee
By \eq{phi1S1},
\begin{eqnarray}\label{lnD1}
\ln D_0(-z^2)&=&\int_a^b\left[S_1(z;x)-\hf{u(x)}\right]dx-\hf{\ln S_1(z;a)+\ln S_1(z;b)}\nonumber\\
& &+\ln\left[1-e^{-2\int_a^bS_1(z;x)dx}\right]-\ln(2\phi_0(0;b)).
\end{eqnarray}
To compute $dD_0(-z^2)/dz$, we need the asymptotic expansion of $\partial S_1(z;x)/\partial z$, which can be derived from \eq{S12}, and the expansion of $\partial \ln S_1(z;x)/\partial z$. Let
$$\diffp{\ln S_1(z;x)}{z}=\diffp{}{z}\left[\ln(s_{-1}z)+\frac{s_1}{s_{-1}z^2}+\cdots\right]=\frac{t_0(x)}{z}+\frac{t_1(x)}{z^3}+\cdots,$$
where
\be\label{tf1}
t_0(x)=1,\ \ \ t_1(x)=-2\frac{s_1}{s_{-1}}=\inv{4}\frac{f'^2}{f^2(1+f'^2)}-\half\frac{f''}{f(1+f'^2)^2}.
\ee
Substituted into \eq{lnD1},
\ben
\frac \partial {\partial z} \ln D_0(-z^2)&=&\int_a^b s_{-1}(x)dx-\inv{z^2}\int_a^b s_1(x)dx-\cdots\\
&&-\half\frac{t_0(a)+t_0(b)}{z}-\half\frac{t_1(a)+t_1(b)}{z^3}-\cdots
\een
Now we substitute the equation above into the zeta function in \eq{zeta1z}. The last integral above is uniformly bounded for s with $\Re s$ bounded below, while the integration from 0 to 1 in \eq{zeta1z} is uniformly bounded for $s$ bounded above by $1-\epsilon$ for any $\epsilon>0$. Using the meromorphic extension of the following integrals to $s\in\CC$,
\be\label{intz1}
\int_1^\infty z^{-2s}z^{-n}dz=\inv{2s+n-1},\ \ \ n=0,1,2,\ldots
\ee
we can compute the residue of the integral from 1 to $\infty$ in \eq{zeta1z} at $s=\half,0,-\half,$ etc. If $s$ in an half integer, $\sin\pi s\neq0$, and $\zeta_1$ has a single pole at $s$ for a general $f(x)$. If $s$ is an integer, $\sin\pi s$ has a single zero at $s$, and so $\zeta_1$ has a finite value at $s$. For example, using \eq{intz1} with $n=0,1,2,3$ respectively, and plugging into \eqs{sf1}{tf1}, we have
\be\label{zeta11}
\res\zeta_1 \left(\half\right)=\inv{2\pi}\int_a^bs_{-1}(x)dx=\inv{2\pi}\int_a^b\sqrt{1+f'(x)^2}dx,
\ee
\be\label{zeta12}
\zeta_1(0)=-\frac{t_0(a)+t_0(b)}{4}=-\half,
\ee
\be\label{zeta13}
\res\zeta_1\left(-\half\right)=\inv{2\pi}\int_a^bs_1(x)dx
=\inv{2\pi}\int_a^b\left[-\inv{8}\frac{f'^2}{f^2\sqrt{1+f'^2}}+\inv{4}\frac{f''}{f(1+f'^2)^{3/2}}\right]dx,
\ee
\bes\label{zeta14}
\zeta_1(-1)&=&\frac{t_1(a)+t_1(b)}{4}=\inv{16}\left[\frac{f'(a)^2}{f(a)^2(1+f'(a)^2)}-\frac{2f''(a)}{f(a)(1+f'(a)^2)^2}\right.\non\\
&&+\left.\frac{f'(b)^2}{f(b)^2(1+f'(b)^2)}-\frac{2f''(b)}{f(b)(1+f'(b)^2)^2}\right].
\ees
The calculation can be continued for $s<-1$, but the expressions become very complicated and we do not include them here.

\vspace{0.2in}{\bf Derivative of $\zeta_1(s)$ at $s=0$}

To compute $\zeta_1'(0)$, we use integration by parts. Rewrite $\zeta_1(s)$ as
\bes\label{zeta1ip}
\zeta_1(s)&=&\frac{\sin\pi s}{\pi}s\int_0^\infty y^{-s-1}\ln D_0(-y)dy\non\\
&=&\frac{\sin\pi s}{\pi}s\left[\int_0^1 y^{-s-1}\ln D_0(-y)dy+2\int_1^\infty z^{-2s-1}\ln D_0(-z^2)dz\right].
\ees
Since $\ln D_0(-y)=O(y)$ for $y\to0$, the integration from 0 to 1 is bounded for $s=0$. For the integration from 1 to $\infty$, by the asymptotic expansion \eq{S12},
$$ S_{1}(z;x) = s_{-1}(x)z + O(z^{-1})=s_{-1}(x)z\cdot[1+O(z^{-2})],$$
as $z\to\infty$. Substituted into \eq{lnD1},
\ben
&&\ln D_0(-z^2) = \int_a^b\left[s_{-1}(x)z+O(z^{-1})-\hf{u(x)}\right]dx-\half\ln(s_{-1}(a)z)-\half\ln(s_{-1}(b)z)\\
&&-O(z^{-2}) + \ln\left[1-\exp(-2\int_a^b(s_{-1}(x)z+O(z^{-1}))dx)\right]-\ln(2\phi_0(0;b))\\
&&=\int_a^b\left[s_{-1}(x)z-\hf{u(x)}\right]dx-\ln z-\hf{\ln s_{-1}(a)+\ln s_{-1}(b)}-\ln(2\phi_0(0;b))+O(z^{-1}).
\een
Substituted into \eq{zeta1ip}, as $s\to0$,
\ben
\zeta_1(s)&=&2s\frac{\sin\pi s}{\pi}\int_1^\infty z^{-2s-1}\ln D_0(-z^2)dz\\
 &=& 2s^2\int_1^\infty z^{-2s-1} \left\{z\int_a^b s_{-1}(x)dx-\ln z\right\}dz+O(s^2)\\
&&- 2s^2\int_1^\infty z^{-2s-1}dz\left\{\half\int_a^bu(x)dx+\hf{\ln s_{-1}(a)+\ln s_{-1}(b)}+\ln(2\phi_0(0;b))\right\}\\
&=& \frac{2s^2}{2s-1}\int_a^b\sqrt{1+f'^2}dx-\half+O(s^2)\\
&&- s\left\{\half\int_a^b u(x)dx+\hf{\ln s_{-1}(a)+\ln s_{-1}(b)}+\ln(2\phi_0(0;b))\right\}.
\een
This gives $\zeta_1(0)=-\half$, as in \eq{zeta12}, and furthermore the following theorem.
\begin{theorem}
The derivative of the zeta function in the one-dimensional case is given by
\be\label{zeta1p}
\zeta_1'(0) = -\hf{\ln f(a)+\ln f(b)} - \ln 2A,
\ee
where
\be\label{A}
A = \int_a^b\frac{\sqrt{1+f'(x)^2}}{f(x)}dx.
\ee
\end{theorem}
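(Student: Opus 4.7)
The plan is to read off the coefficient of $s$ in the expansion of $\zeta_1(s)$ about $s=0$ that has already been assembled just before the theorem statement. Because $\frac{2s^2}{2s-1}$ and every explicit $O(s^2)$ piece vanish to second order, and the constant $-\half$ contributes nothing upon differentiation, one immediately obtains
\[
\zeta_1'(0) = -\half\int_a^b u(x)\,dx \;-\; \half\bigl(\ln s_{-1}(a) + \ln s_{-1}(b)\bigr) \;-\; \ln 2 \;-\; \ln \phi_0(0;b).
\]
All that remains is to evaluate each term in closed form and to recognize the collective simplification.

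The $u$-integral is elementary: since $u = (\ln f)' - \half\bigl(\ln(1+f'^2)\bigr)'$,
\[
\int_a^b u\,dx = \ln\frac{f(b)}{f(a)} \;-\; \half \ln\frac{1+f'(b)^2}{1+f'(a)^2}.
\]
The genuinely non-trivial input is $\phi_0(0;b)$, which I would compute by specializing \eq{phi-ode} to $\lambda=k=0$. The equation collapses to $\phi''+u\phi'=0$, i.e.\ $(\ln\phi')'=-u$, which is a first-order linear equation for $\phi'$. Combined with $\phi'(a)=1$ and $\phi(a)=0$, two successive integrations give
\[
\phi'(x) = \exp\Bigl(-\int_a^x u\,d\sigma\Bigr) = \frac{f(a)}{\sqrt{1+f'(a)^2}}\;\frac{\sqrt{1+f'(x)^2}}{f(x)},
\]
and
\[
\phi_0(0;b) = \frac{f(a)}{\sqrt{1+f'(a)^2}}\int_a^b\frac{\sqrt{1+f'(x)^2}}{f(x)}\,dx = \frac{f(a)\,A}{\sqrt{1+f'(a)^2}}.
\]
This is exactly the place where the geometric quantity $A$ enters.

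Substituting these evaluations together with $s_{-1}(x)=\sqrt{1+f'(x)^2}$ into the formula for $\zeta_1'(0)$ and collecting like logarithms, one sees that the $\ln(1+f'(b)^2)$ contributions from $-\half\int u$ and $-\half\ln s_{-1}(b)$ cancel; the three pieces proportional to $\ln(1+f'(a)^2)$ (from $-\half\int u$, from $-\half\ln s_{-1}(a)$, and from $-\ln\phi_0(0;b)$) also cancel; the $\ln f(a)$ terms combine to $-\half\ln f(a)$; the $\ln f(b)$ term is $-\half\ln f(b)$; and the residue $-\ln 2-\ln A$ assembles into $-\ln(2A)$. The result is precisely $\zeta_1'(0) = -\half(\ln f(a)+\ln f(b))-\ln(2A)$.

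The only step requiring real ideas, as opposed to bookkeeping, is the closed-form solution of the $\lambda=0$ boundary problem: that works here because $k=0$ reduces the ODE to a first-order equation whose integrating factor is built from the very function $u(x)$ that also controls the WKB expansion of $\ln D_0$. That matching is what makes all the $\ln(1+f'^2)$ boundary contributions cancel and leaves the clean expression in terms of $A$.
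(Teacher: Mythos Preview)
Your proposal is correct and follows essentially the same approach as the paper: extract $\zeta_1'(0)$ from the $s$-expansion already assembled, write $u=(\ln f-\ln\sqrt{1+f'^2})'$, solve the $\lambda=k=0$ initial value problem to get $\phi_0(0;b)=\frac{f(a)}{\sqrt{1+f'(a)^2}}A$, and simplify. You have merely spelled out the cancellations that the paper summarizes as ``a simplification.''
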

\begin{proof}
By the calculations above, we have
$$\zeta_1'(0)=-\left(\half\int_a^b u(x)dx+\hf{\ln s_{-1}(a)+\ln s_{-1}(b)}+\ln(2\phi_0(0;b))\right).$$
$\phi_0(0;b)$ can be computed exactly,
$$\phi_0(0;b) = \frac{f(a)}{\sqrt{1+f'(a)^2}}A,$$
where $A$ is defined by \eq{A}. Combined with $u(x)=(\ln f-\ln\sqrt{1+f'^2})'$, \eq{zeta1p} follows from a simplification.
\end{proof}

\section{Full zeta function and derivative}\label{2D}

For $k\neq0$, we make the change of variables $y=k^2z$. The differential equation for $\phi(x)$ becomes
\be\label{phi2}
\phi''(x) + u(x)\phi'(x) - k^2(1+f'^2)\left(z+\inv{f^2}\right)\phi(x) = 0,
\ee
where $u(x)$ is as defined in \sect{1D}. The zeta function in \eq{zeta2} becomes
\be\label{zeta2k}
\zeta_2(s)=2\frac{\sin\pi s}{\pi}\sum_{k=1}^\infty k^{-2s}\int_0^\infty z^{-s}\diff{\ln D_k(-k^2z)}{z}dz.
\ee
Because the integrand depends on $k$, the sum and the integral do not factor, but the sum nevertheless resembles the Riemann zeta function. This is a principal motivation for the change of variable. We define $S(k,z;x)$ as in \eq{Sdef}, and \eq{phi1S1} holds with $\phi_0(-z^2;b)$ replaced by $\phi_k(-k^2z;b)$. The equation for $S(k,z;x)$ is
$$ S'(k,z;x)+S(k,z;x)^2=\frac{1+f'^2}{f^2}(zf^2+1)k^2+\frac{u^2}{4}+\hf{u'},$$
and the solutions are written as asymptotic expansions in $k$,
\be\label{S1}
S_1(k,z;x)=s_{-1}(z;x)k+\frac{s_1(z;x)}{k}+\cdots.
\ee
The first two coefficients are
\be\label{sf2}
s_{-1}(z;x)=\frac{\sqrt{1+f'^2}}{f}\sqrt{t+1},\ \ \
s_1(z;x)=\frac{t}{(t+1)^{3/2}}\left[-\inv{8}\frac{t-4}{t+1}\frac{f'^2}{f\sqrt{1+f'^2}}+\inv{4}\frac{f''}{(1+f'^2)^{3/2}}\right],
\ee
where
\be\label{tdef}
t(x,z)=zf(x)^2.
\ee
Recall that $D_k(-k^2z)=\phi_k(-k^2z;b)/\phi_k(0;b)$. By \eq{lnD1},
\begin{eqnarray}\label{lnD2}
\ln D_k(-k^2z)&=&\int_a^b\left[S_1(k,z;x)-\hf{u(x)}\right]dx-\hf{\ln S_1(k,z;a)+\ln S_1(k,z;b)}\nonumber\\
& &+\ln\left[1-e^{-2\int_a^bS_1(k,z;x)dx}\right]-\ln(2\phi_k(0;b)).
\end{eqnarray}
Similarly, let
$$\diffp{\ln S_1(k,-k^2z;x)}{z}=f^2\diffp{}{t}\left[\ln(s_{-1}k)+\frac{s_1}{s_{-1}}k^{-2}+\cdots\right]
=f^2\left[t_0(z;x)+\frac{t_1(z;x)}{k^2}+\cdots\right],$$
where
\be\label{tf2}
t_0=\inv{2(1+t)},\ \ \ t_1=\frac{t^2-10t+4}{8(1+t)^4}\frac{f'^2}{1+f'^2}+\frac{1-t}{4(1+t)^3}\frac{ff''}{(1+f'^2)^2}.
\ee
Substituted into \eq{lnD2},
\ben
\frac \partial {\partial z}\ln D_k(-k^2z)&=&k\int_a^b f(x)^2\diffp{s_{-1}(z;x)}{t}dx+k^{-1}\int_a^b f(x)^2\diffp{s_1(z;x)}{t}dx+\cdots\\
&&-\hf{t_0(z;a)f(a)^2+t_0(z;b)f(b)^2}-\frac{t_1(z;a)f(a)^2+t_1(z;b)f(b)^2}{2k^2}-\cdots\\
&&+\frac{2e^{-2\int_a^b S_1(k,z;x)dx}}{1-e^{-2\int_a^b S_1(k,z;x)dx}}\diffp{}{z}\int_a^b S_1(k,z;x)dx.
\een
Substituted into \eq{zeta2k},
\ben
\zeta_2(s)&=&2\frac{\sin\pi s}{\pi}\zeta_R(2s-1)\int_a^b dx f(x)^{2s}\int_0^\infty t^{-s}\frac \partial {\partial t}s_{-1}(z;x)dt\\
&&+2\frac{\sin\pi s}{\pi}\zeta_R(2s+1)\int_a^b dx f(x)^{2s}\int_0^\infty t^{-s}\frac \partial {\partial t} s_1(z;x)dt+\cdots\\
&&-\frac{\sin\pi s}{\pi}\zeta_R(2s)\int_0^\infty[t_0(z;a)f(a)^{2s}t^{-s}+t_0(z;b)f(b)^{2s}t^{-s}]dt\\
&&-\frac{\sin\pi s}{\pi}\zeta_R(2s+2)\int_0^\infty[t_1(z;a)f(a)^{2s}t^{-s}+t_1(z;b)f(b)^{2s}t^{-s}]dt-\cdots\\
&&+2\frac{\sin\pi s}{\pi}\sum_{k=1}^\infty k^{-2s}\int_0^\infty dzz^{-s}\frac{2e^{-2\int_a^b S_1(k,z;x)dx}}
{1-e^{-2\int_a^b S_1(k,z;x)dx}}\diffp{}{z}\int_a^b S_1(k,z;x)dx.
\een
As before, $\zeta_R$ is the Riemann zeta function. Since $S_1(k,z;x)=O(k\sqrt{z})$ for large $k$ or $z$, the last term above does not contribute to the residues or values of
$\zeta_2(s)$ for $s$ being integers or half integers below (including) 1. Using the following standard formulas \cite{lebe72b},
$$\frac{\sin\pi s}{\pi}=\inv{\Gamma(s)\Gamma(1-s)},$$
and
$$\int_0^\infty \frac{z^{-s}}{(z+1)^p}dz=B(1-s,s+p-1)=\frac{\Gamma(1-s)\Gamma(s+p-1)}{\Gamma(p)},$$
we obtain
\ben
\zeta_2(s)&=&\zeta_R(2s-1)\frac{\Gamma(s-\half)}{\sqrt{\pi}\Gamma(s)}\int_a^bf(x)^{2s-1}\sqrt{1+f'(x)^2}dx\\
&&+\zeta_R(2s+1)\frac{\Gamma(s+\half)}{\sqrt{\pi}\Gamma(s)}s
\left[\frac{5s+1}{3}\int_a^b\frac{f(x)^{2s-1}f'(x)^2}{\sqrt{1+f'(x)^2}}dx+\int_a^b\frac{f(x)^{2s}f''(x)}{(1+f'(x)^2)^{3/2}}dx\right]+\cdots\\
&&-\half\zeta_R(2s)[f(a)^{2s}+f(b)^{2s}]\\
&&-\zeta_R(2s+2)f(a)^{2s}\frac{s^2}{4}\left[\frac{f'^2(a)}{1+f'^2(a)}\frac{5s+3}{4}+\frac{f''(a)f(a)}{(1+f'^2(a))^2}\right]\\
&&-\zeta_R(2s+2)f(b)^{2s}\frac{s^2}{4}\left[\frac{f'^2(b)}{1+f'^2(b)}\frac{5s+3}{4}+\frac{f''(b)f(b)}{(1+f'^2(b))^2}\right]-\cdots
\een
For $s=1,\half,0,-\half$, etc, the residue or value of $\zeta_2(s)$ may have two contributions from the series above, since both $\zeta_R(s)$ and $\Gamma(s)$ have poles.
For example, using properties of the Riemann zeta function, $\zeta_R(0)=-\half$ and $\res\,\,\zeta_R(1)=1$, we have
\be\label{zeta21}
\res\zeta_2(1)=\half\int_a^bf(x)\sqrt{1+f'(x)^2}dx,
\ee
\be\label{zeta22}
\res\zeta_2(\half)=-\inv{2\pi}\int_a^b\sqrt{1+f'(x)^2}dx-\frac{f(a)+f(b)}{4},
\ee
\be\label{zeta23}
\zeta_2(0)=\half,
\ee
\bes\label{zeta24}
&&\res\zeta_2(-\half)=-\inv{2\pi}\int_a^b\left[-\inv{8}\frac{f'^2}{f^2\sqrt{1+f'^2}}+\inv{4}\frac{f''}{f(1+f'^2)^{3/2}}\right]dx\non\\
&&-\inv{32f(a)}[\inv{8}\frac{f'^2(a)}{1+f'^2(a)}+\frac{f''(a)f(a)}{(1+f'^2(a))^2}]
-\inv{32f(b)}\left[\inv{8}\frac{f'^2(b)}{1+f'^2(b)}+\frac{f''(b)f(b)}{(1+f'^2(b))^2}\right].
\ees
Adding up $\zeta_1(s)$ and $\zeta_2(s)$, and recalling that $\zeta_1(1)$ is finite, we obtain the following theorem for the full zeta function.
\begin{theorem}\label{zetaf}
For the full zeta function of the Laplacian on a surface of revolution,
\be\label{zetaf1}
\res\zeta(1)=\half\int_a^bf(x)\sqrt{1+f'(x)^2}dx.
\ee
\be\label{zetaf2}
\res\zeta\left(\half \right)=-\frac{f(a)+f(b)}{4}.
\ee
\be\label{zetaf3}
\zeta(0)=0.
\ee
\be\label{zetaf4}
\res\zeta \left(-\half\right)=-\inv{32f(a)}\left[\inv{8}\frac{f'^2(a)}{1+f'^2(a)}+\frac{f''(a)f(a)}{(1+f'^2(a))^2}\right]
-\inv{32f(b)}\left[\inv{8}\frac{f'^2(b)}{1+f'^2(b)}+\frac{f''(b)f(b)}{(1+f'^2(b))^2}\right].
\ee
\end{theorem}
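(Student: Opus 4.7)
The plan is to invoke the decomposition $\zeta(s)=\zeta_1(s)+\zeta_2(s)$ from \eq{master} and simply add the four pieces already computed in \sect{1D} and in the preamble to this theorem. Concretely, the data needed are \eqs{zeta11}{zeta14} for $\zeta_1$ (together with the observation that $\zeta_1(s)$ is regular at $s=1$, since the derivation in \sect{1D} produces poles only at $s=\half,0,-\half,\ldots$), and \eqs{zeta21}{zeta24} for $\zeta_2$. Nothing new has to be computed; the work is bookkeeping and verifying cancellations.

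I would then go through the four points in turn. At $s=1$, $\zeta_1$ contributes nothing because it is holomorphic there, so $\res\zeta(1)=\res\zeta_2(1)$ is given directly by \eq{zeta21}, which yields \eq{zetaf1}. At $s=\half$, both pieces have simple poles, but the $\frac{1}{2\pi}\int_a^b\sqrt{1+f'(x)^2}\,dx$ term appearing in \eq{zeta11} with a plus sign is cancelled by the same integral appearing in \eq{zeta22} with a minus sign, leaving the boundary contribution $-(f(a)+f(b))/4$ of \eq{zetaf2}. At $s=0$ both $\zeta_1$ and $\zeta_2$ are regular, with values $-\tfrac12$ and $+\tfrac12$ respectively from \eqs{zeta12}{zeta23}, summing to zero as in \eq{zetaf3}. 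At $s=-\half$, the bulk integrals in \eqs{zeta13}{zeta24} differ only by sign and cancel exactly, leaving the purely boundary expression in \eq{zetaf4}.

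There is really no main obstacle here; the proof is essentially a one-line addition once the two sets of formulas are in hand. The only thing that requires a moment of care is matching the explicit coefficients in \eqs{zeta13}{zeta24} so that the $\int_a^b\!\bigl[-\frac{1}{8}\frac{f'^2}{f^2\sqrt{1+f'^2}}+\frac{1}{4}\frac{f''}{f(1+f'^2)^{3/2}}\bigr]\,dx$ terms cancel with the correct sign, and noting that $\zeta_1$ has no pole at $s=1$ (otherwise one might mistakenly expect an integrand of the form $\sqrt{1+f'^2}$ to appear in \eq{zetaf1}). After these observations the theorem follows by termwise addition of the relevant residues and values.
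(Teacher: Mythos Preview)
Your proposal is correct and is exactly the approach the paper takes: the paper's entire argument for \thm{zetaf} is the single sentence ``Adding up $\zeta_1(s)$ and $\zeta_2(s)$, and recalling that $\zeta_1(1)$ is finite, we obtain the following theorem for the full zeta function.'' Your write-up simply spells out the termwise additions and cancellations that this sentence summarizes.
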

The geometric interpretation of the formulas above will be discussed in \sect{discuss}.

\vspace{0.2in}{\bf Derivative of $\zeta_2 (s)$ at $s=0$}

The derivative of $\zeta_2(s)$ at 0 can be calculated in a similar but more subtle way. The following theorem is the main result of the paper.
\begin{theorem}
The derivative at zero of the holomorphic extension of the zeta function on the two-dimensional surface of revolution is given by
\be\label{zetafp}
\zeta'(0) = -2\ln\phi\left(e^{-2\frac{\pi^2}{A}}\right) + \frac{\pi^2}{6A} + \inv{6}\int_a^b\frac{f'(x)^2}{f(x)\sqrt{1+f'(x)^2}}dx
 + \half\int_a^b\frac{f''(x)}{(1+f'(x)^2)^{3/2}} dx,
\ee
where $A$ is defined in \eq{A}, and $\phi$ is the Euler function \cite{apos76b,NIST} defined by
\be\label{euler}
\phi(q) = \prod_{k=1}^\infty (1-q^k).
\ee
\end{theorem}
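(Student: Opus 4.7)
The plan is to mirror the derivation of $\zeta_1'(0)$ leading to \eq{zeta1p}: integrate by parts once to pull out a factor of $s$, substitute the WKB representation \eq{lnD2} of $\ln D_k(-k^2 z)$, and identify the coefficient of $s^1$ in the small-$s$ expansion. Rewriting
\begin{equation*}
\zeta_2(s) = 2\,\frac{\sin\pi s}{\pi}\,s\,\sum_{k=1}^\infty k^{-2s}\int_0^\infty z^{-s-1}\ln D_k(-k^2 z)\,dz,
\end{equation*}
I split $\ln D_k(-k^2 z)$ into the ``WKB polynomial'' part $P_k(z) := \int_a^b[S_1 - \half u]\,dx - \half[\ln S_1(k,z;a) + \ln S_1(k,z;b)]$ and the ``exponential remainder'' part $R_k(z) := \ln[1 - e^{-2\int_a^b S_1\,dx}] - \ln(2\phi_k(0;b))$, and correspondingly $\zeta_2 = \zeta_2^{P} + \zeta_2^{R}$.

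\textbf{Polynomial piece.} For $\zeta_2^{P}$, expand $S_1(k,z;x)$ via \eq{S1}--\eq{sf2} and use the substitution $t = zf(x)^2$ together with the Beta identity $\int_0^\infty t^{-s}(1+t)^{-p}\,dt = \Gamma(1-s)\Gamma(s+p-1)/\Gamma(p)$. This reproduces the explicit series representation displayed in the text just before \eq{zeta21}. The prefactor $s\sin\pi s /(\pi\Gamma(s)) = s^2 + O(s^3)$ kills the contribution of every term except where a $\zeta_R$ factor brings a pole at $s=0$. The simple pole of $\zeta_R(2s+1) = 1/(2s) + \gamma + O(s)$ inside the $s_1$-term yields, upon reading the coefficient of $s$, the two integrals $\inv{6}\int f'^2/(f\sqrt{1+f'^2})\,dx + \half\int f''/(1+f'^2)^{3/2}\,dx$ appearing in \eq{zetafp}; differentiation of $f(x)^{2s}$ at $s=0$ in the $-\half\zeta_R(2s)[f(a)^{2s}+f(b)^{2s}]$ boundary term yields $\half(\ln f(a) + \ln f(b))$; and the $\zeta_R(2s-1)\Gamma(s-\half)$ piece contributes an $A$-proportional constant that is combined with the Euler term below.

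\textbf{Exponential piece.} For $\zeta_2^{R}$ the essential identity is
\begin{equation*}
\frac{2e^{-2\int_a^b S_1\,dx}}{1 - e^{-2\int_a^b S_1\,dx}}\,\partial_z\!\int_a^b S_1\,dx \;=\; \partial_z\ln\!\bigl(1 - e^{-2\int_a^b S_1\,dx}\bigr).
\end{equation*}
After integration by parts, $\zeta_2^{R}(s)$ reduces (modulo terms already subsumed into $\zeta_2^{P}$) to a sum/integral containing $\sum_{k\ge 1}\ln(1 - e^{-2k\tilde S(z)}) = \ln\phi(e^{-2\tilde S(z)})$, where $\tilde S(z) = \int_a^b \sqrt{(1+f'^2)(1+zf^2)}/f\,dx$ is the leading-order WKB action and $\tilde S(0) = A$. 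The order-$s$ coefficient at $s=0$ reduces to $\ln\phi(e^{-2A})$, which I then convert via the modular (Jacobi imaginary) transformation $\eta(-1/\tau) = \sqrt{-i\tau}\,\eta(\tau)$ with $\phi(q) = q^{-1/24}\eta(\tau)$ and $\tau = iA/\pi$ into
\begin{equation*}
-2\ln\phi(e^{-2A}) \;=\; -2\ln\phi(e^{-2\pi^2/A}) + \frac{\pi^2}{6A} - \frac{A}{6} - \ln(A/\pi).
\end{equation*}
Adding $\zeta_1'(0) + (\zeta_2^{P})'(0) + (\zeta_2^{R})'(0)$, the $-A/6$ cancels the $A$-constant from the $\zeta_R(2s-1)$ piece; the $-\ln(A/\pi)$ combines with $-\ln(2A)$ from \eq{zeta1p} and with constants from the boundary terms; and the $-\half(\ln f(a)+\ln f(b))$ from \eq{zeta1p} cancels the corresponding $\zeta_R(2s)$ boundary contribution from the polynomial piece, leaving exactly \eq{zetafp}.

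\textbf{Main obstacle.} The hardest step is the reduction in the third paragraph: justifying that the full $z$-integral $\int_0^\infty z^{-s-1}\ln(1 - e^{-2\int_a^b S_1\,dx})\,dz$ contributes at $s=0$ only $\ln\phi(e^{-2A})$ plus terms that are either analytic and cancel elsewhere or vanish. This requires separating the $z \to 0$ endpoint (which produces the Euler function via the summation over $k$) from the bulk of the $z$-integration (which must assemble into polynomial-in-$f$ pieces already present in $\zeta_2^{P}$), as well as justifying the interchange of $\sum_k$, $\int dz$ and the analytic continuation in $s$. Tracking every numerical constant through the Jacobi imaginary transformation is the principal bookkeeping challenge and is what ultimately selects the particular closed form \eq{zetafp}.
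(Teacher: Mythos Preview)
Your outline follows the paper's strategy closely: integrate by parts, insert the WKB representation, extract the coefficient of $s$, and convert the Euler function via a modular-type identity. Two points deserve correction.

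\textbf{The decomposition.} Your split $\ln D_k = P_k + R_k$ is not the one the paper uses, and it is the source of the ``main obstacle'' you flag. Neither $P_k(z)$ nor $R_k(z)$ vanishes at $z=0$, so $\int_0^\infty z^{-s-1}P_k(z)\,dz$ and $\int_0^\infty z^{-s-1}R_k(z)\,dz$ are individually ill-defined at the lower endpoint; you cannot simply invoke the Beta identity on each piece. The paper resolves this by first computing $S_1(k,0;x)=k\sqrt{1+f'^2}/f$ and $\phi_k(0;b)$ \emph{exactly}, then rewriting
\[
\ln D_k(-k^2z)=\int_a^b S_1\,dx - Ak \;-\;\tfrac12\ln\frac{S_1(k,z;a)S_1(k,z;b)}{S_1(k,0;a)S_1(k,0;b)}\;+\;\ln\frac{1-e^{-2\int S_1}}{1-e^{-2Ak}},
\]
so that every summand vanishes at $z=0$. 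With this reorganization each of the four resulting $z$-integrals is separately convergent near $z=0$ and can be evaluated directly (the first three via Beta integrals, the fourth by noting that on $[1,\infty)$ the numerator is exponentially small and only the $z$-independent denominator $-\ln(1-e^{-2Ak})$ survives, producing $-2s\ln\phi(e^{-2A})$ after summing over $k$). There is no need for the auxiliary function $\tilde S(z)$ or for separating ``bulk'' from ``endpoint'' contributions; the obstacle disappears once the $z=0$ subtraction is built in. Along the way the third term contributes not only $\tfrac12(\ln f(a)+\ln f(b))$ but also $-2\zeta_R'(0)=\ln(2\pi)$, which you omit but which is needed for the final cancellation.

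\textbf{The modular identity.} Your stated transformation has a sign error: from $\eta(-1/\tau)=\sqrt{-i\tau}\,\eta(\tau)$ with $\tau=iA/\pi$ one gets
\[
-2\ln\phi(e^{-2A}) \;=\; -2\ln\phi\bigl(e^{-2\pi^2/A}\bigr) + \frac{\pi^2}{6A} - \frac{A}{6} + \ln\frac{A}{\pi},
\]
with $+\ln(A/\pi)$, not $-\ln(A/\pi)$. With the correct sign, $\ln(A/\pi)+\ln(2\pi)=\ln(2A)$ cancels the $-\ln(2A)$ from $\zeta_1'(0)$; with your sign the logarithms would not cancel. The paper, incidentally, proves this identity (its Lemma~\ref{lemma}) not via Dedekind $\eta$ but by exploiting the $R\leftrightarrow L$ symmetry of the spectrum on the flat cylinder $f\equiv R/\pi$; your route through the modular transformation is a perfectly valid alternative once the sign is fixed.
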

\begin{proof}
Integrating \eq{zeta2k} by parts, we find that
\be\label{zeta2ip}
\zeta_2(s)=2s\frac{\sin\pi s}{\pi}\sum_{k=1}^\infty k^{-2s}\int_0^\infty z^{-s-1}\ln D_k(-k^2z)dz.
\ee
$S_1(k,0;x)$ and $\phi_k(0;b)$ can be computed exactly,
$$S_1(k,0;x)=\frac{\sqrt{1+f'(x)^2}}{f(x)}k,\ \ \ \phi_k(0;b)=\frac{f(a)}{\sqrt{1+f'(a)^2}}\frac{e^{kA}-e^{-kA}}{2k},$$
where $A$ is defined by \eq{A}. This allows us to write
\be\label{lnD2p}
\ln D_k(-k^2z) = \int_a^b S_1(k,z;x)dx-Ak -\half\ln\frac{S_1(k,z;a)S_1(k,z;b)}{S_1(k,0;a)S_1(k,0;b)}
+ \ln\frac{1-e^{-2\int_a^b S_1(k,z;x)dx}}{1-e^{-2Ak}}.
\ee
By the asymptotic expansion in \eq{S1}, we write $S_1(k,z;x)$ as
$$ S_1(k,z;x)=s_{-1}(z;x)k+\frac{s_1(z;x)}{k}+R(k,z;x), $$
where the remainder $R(k,z;x)=O(y^{-3/2})=O(k^{-3}z^{-3/2})$ as $y\to\infty$. In addition, since $S_1(k,0;x)=s_{-1}(0;x)k$, we have $s_1(z;x)=O(z)$ and $R(k,z;x)=O(k^{-3}z)$ as $z\to 0$. It implies that $\ln S_1(k,z;x)$ can be written as
$$ \ln S_1(k,z;x) = \ln(s_{-1}(z;x)k) + D(k,z;x), $$
where $D(k,z;x)=O(y^{-1})=O(k^{-2}z^{-1})$ as $y\to\infty$, and $D(k,z;x)=O(k^{-2}z)$ as $z\to 0$. Substituting \eq{lnD2p} into \eq{zeta2ip} and using the expressions for $S_1(k,z;x)$, we have
\ben
&&\!\!\!\!\!\!\!\!\zeta_2(s)=2s\frac{\sin\pi s}{\pi}\sum_{k=1}^\infty k^{-2s}\int_0^\infty z^{-s-1}\left\{\int_a^b
\left[\frac{\sqrt{1+f'^2}}{f}k(\sqrt{t+1}-1)+\frac{s_1(z;x)}{k}+R(k,z;x)\right]dx\right.\\
&&\left.-\frac{\ln(zf(a)^2+1)+\ln(zf(b)^2+1)}{4}-\hf{D(k,z;a)+D(k,z;b)}+\ln\frac{1-e^{-2\int_a^b S_1(k,z;x)dx}}{1-e^{-2Ak}}\right\}dz.
\een
The asymptotics of $R(k,z;x)$ and $D(k,z;x)$ as $y\to\infty$ and as $z\to0$ implies that
$$\sum_{k=1}^\infty k^{-2s}\int_0^\infty z^{-s-1}dz\left\{\int_a^b R(k,z;x)dx-\hf{D(k,z;a)+D(k,z;b)}\right\}$$
is analytic for $-\frac 1 2 <\Re s < 1$. Therefore,
\ben
\zeta_2(s)&=&2s\frac{\sin\pi s}{\pi}\sum_{k=1}^\infty k^{-2s}\int_0^\infty z^{-s-1}\left\{\int_a^b
\left[\frac{\sqrt{1+f'^2}}{f}k(\sqrt{t+1}-1)+\frac{s_1(z;x)}{k}\right]dx\right.\\
&&\left.-\frac{\ln(zf(a)^2+1)+\ln(zf(b)^2+1)}{4}+\ln\frac{1-e^{-2\int_a^b S_1(k,z;x)dx}}{1-e^{-2Ak}}\right\}dz+O(s^2).
\een
We now calculate each one of these terms individually. For the first one, we have
\ben
&&2s\frac{\sin\pi s}{\pi}\sum_{k=1}^\infty k^{-2s}\int_0^\infty z^{-s-1}dz\intab\frac{\sqrt{1+f'(x)^2}}{f(x)}k(\sqrt{t+1}-1)dx\\
&=&2\frac{\sin\pi s}{\pi}\sum_{k=1}^\infty k^{1-2s}\intab f(x)^{2s}\frac{\sqrt{1+f'(x)^2}}{f(x)}dx\int_0^\infty \frac{t^{-s}}{2\sqrt{t+1}}dt\\
&=&\zeta_R(2s-1)\frac{\Gamma(s-\half)}{\Gamma(\half)\Gamma(s)}\intab f(x)^{2s-1}\sqrt{1+f'(x)^2}dx\\
&=&s\frac{A}{6}+O(s^2),
\een
where we used $\zeta_R(-1)=-1/12$.

For the second term, substituting $s_1(z;x)$ from \eq{sf2},
\ben
&&\!\!2s\frac{\sin\pi s}{\pi}\sum_{k=1}^\infty k^{-2s}\int_0^\infty z^{-s-1}dz\intab\frac{s_1(z;x)}{k}dx\\
&=&\!\!2s\frac{\sin\pi s}{\pi}\sum_{k=1}^\infty k^{-2s-1}\intab\frac{f^{2s}(x)dx}{\sqrt{1+f'(x)^2}}
\int_0^\infty \frac{t^{-s-1}dt}{\sqrt{t+1}}\left[-\frac{t(t-4)}{8(t+1)^2}\frac{f'(x)^2}{f(x)}+\inv{4}\frac{t}{t+1}\frac{f''(x)}{1+f'(x)^2}\right]\\
&=&\!\!s\zeta_R(2s+1)\frac{\Gamma(s+\half)}{\Gamma(\half)\Gamma(s)}\left[\frac{5s+1}{3}
\intab\frac{f(x)^{2s}f'(x)^2}{f(x)\sqrt{1+f'(x)^2}}dx+\intab\frac{f(x)^{2s}f''(x)}{(1+f'(x)^2)^{3/2}}dx\right]\\
&=&\!\!s\left[\inv{6}\intab\frac{f'(x)^2}{f(x)\sqrt{1+f'(x)^2}}dx+\half\intab\frac{f''(x)}{(1+f'(x)^2)^{3/2}}dx\right]+O(s^2).
\een
For the third term, we obtain
\ben
&&2s\frac{\sin\pi s}{\pi}\sum_{k=1}^\infty k^{-2s}\int_0^\infty z^{-s-1}\left[-\frac{\ln(zf(a)^2+1)+\ln(zf(b)^2+1)}{4}\right]dz\\
&=&-\half\frac{\sin\pi s}{\pi}\sum_{k=1}^\infty k^{-2s}\int_0^\infty \frac{t^{-s}dt}{t+1}[f(a)^{2s}+f(b)^{2s}]\\
&=&-\half\zeta_R(2s)[f(a)^{2s}+f(b)^{2s}]\\
&=&\half+s\left[-2 \zeta'_R(0)+\hf{\ln f(a)+\ln f(b)}\right]+O(s^2)\\
&=&\half+s\left[\ln(2\pi)+\hf{\ln f(a)+\ln f(b)}\right]+O(s^2),
\een
where we used $\zeta_R'(0)=-\frac 1 2 \ln(2\pi)$. The fourth term involves the Euler function defined in \eq{euler},
\ben
&&2s\frac{\sin\pi s}{\pi}\sum_{k=1}^\infty k^{-2s}\int_0^\infty z^{-s-1}\ln\frac{1-e^{-2\int_{a}^{b}S_1(k,z;x)dx}}{1-e^{-2Ak}}dz\\
&=&2s\frac{\sin\pi s}{\pi}\left\{\sum_{k=1}^\infty k^{-2s}\int_1^\infty z^{-s-1}\ln\frac{1-e^{-2\intab S_1(k,z;x)dx}}{1-e^{-2Ak}}dz+O(1)\right\}\\
&=&-2s\frac{\sin\pi s}{\pi}\left\{\sum_{k=1}^\infty k^{-2s}\int_1^\infty z^{-s-1}\ln(1-e^{-2Ak})dz+O(1)\right\}\\
&=&-2s\sum_{k=1}^\infty\ln(1-e^{-2Ak})+O(s^2)\\
&=&-2s\ln\phi(e^{-2A})+O(s^2).
\een
Adding these four terms gives
\bes\label{zeta2p1}
\zeta_2'(0)&=&-2\ln\phi(e^{-2A})+\frac{A}{6}+\ln(2\pi)+\hf{\ln f(a)+\ln f(b)}\non\\
&&+\inv{6}\intab\frac{f'(x)^2}{f(x)\sqrt{1+f'(x)^2}}dx+\half\intab\frac{f''(x)}{(1+f'(x)^2)^{3/2}}dx.
\ees
Applying the following formula, which will be proved as \lem{lemma},
$$ \ln\phi(e^{-2\pi r})-\ln\phi\left(e^{-\frac{2\pi}{r}}\right)=\half\left[\frac{\pi}{6}(r-\inv{r})-\ln r\right], $$
with $r=A/\pi$, we have
\bes\label{zeta2p2}
\zeta_2'(0)&=&-2\ln\phi(e^{-2\frac{\pi^2}{A}})+\frac{\pi^2}{6A}+\ln(2A)+\hf{\ln f(a)+\ln f(b)}\non\\
&&+\inv{6}\intab\frac{f'(x)^2}{f(x)\sqrt{1+f'(x)^2}}dx+\half\intab\frac{f''(x)}{(1+f'(x)^2)^{3/2}}dx.
\ees
We now add $\zeta_1'(0)$ from \eq{zeta1p} with $\zeta_2'(0)$ above to obtain \eq{zetafp}.
\end{proof}

\begin{lemma}\label{lemma}
The following formula holds for the Euler function,
\be\label{Euler}
\ln\phi \left(e^{-2\pi r}\right)-\ln\phi \left(e^{-\frac{2\pi}{r}}\right)=\half\left[\frac{\pi}{6}(r-\inv{r})-\ln r\right].
\ee
\end{lemma}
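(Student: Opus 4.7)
The identity (\ref{Euler}) is exactly the modular transformation of the Dedekind eta function $\eta(\tau) := e^{i\pi\tau/12}\phi(e^{2\pi i\tau})$ restricted to $\tau = ir$ with $r > 0$. From $\eta(i/r) = \sqrt{r}\,\eta(ir)$, the special case of $\eta(-1/\tau) = \sqrt{-i\tau}\,\eta(\tau)$, one obtains $e^{-\pi/(12r)}\phi(e^{-2\pi/r}) = \sqrt{r}\,e^{-\pi r/12}\phi(e^{-2\pi r})$, and taking a real logarithm produces (\ref{Euler}) at once. Since the paper does not want to presuppose the full theory of modular forms, I would give a self-contained Mellin--Barnes derivation.

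The starting point is the elementary rewriting
$$-\ln\phi(e^{-2\pi r}) = \sum_{n,m \geq 1}\frac{e^{-2\pi rmn}}{n}.$$
Plugging in $e^{-x} = \frac{1}{2\pi i}\int_{c-i\infty}^{c+i\infty}\Gamma(s)\,x^{-s}\,ds$ for any $c > 1$ and carrying out the $(m,n)$-summation against the Dirichlet series $\sum_{m,n\geq 1}\frac{1}{n^{s+1}m^s} = \zeta_R(s)\zeta_R(s+1)$ produces
$$-\ln\phi(e^{-2\pi r}) = \frac{1}{2\pi i}\int_{c-i\infty}^{c+i\infty}\Gamma(s)\,\zeta_R(s)\,\zeta_R(s+1)\,(2\pi r)^{-s}\,ds.$$
I would next push the contour leftward across $\Re s = 0$ and $\Re s = -1$ to some line $\Re s = c' < -1$. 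The trivial zeros of $\zeta_R(s)$ kill the poles of $\Gamma(s)$ at $s = -2, -4, \ldots$, so only three residues contribute: at $s=1$, the simple pole of $\zeta_R(s)$ gives $\pi/(12r)$ via $\zeta_R(2) = \pi^2/6$; at $s=0$, a double pole from the coincident simple poles of $\Gamma(s)$ and $\zeta_R(s+1)$ gives $\tfrac{1}{2}\ln r$; at $s=-1$, the simple pole of $\Gamma(s)$ gives $-\pi r/12$ via $\zeta_R(-1) = -\tfrac{1}{12}$ and $\zeta_R(0) = -\tfrac{1}{2}$.

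What remains is the shifted integral along $\Re s = c'$. The substitution $s \mapsto -s$, combined with the consequence $\Gamma(-s)\zeta_R(-s)\zeta_R(1-s) = (2\pi)^{-2s}\Gamma(s)\zeta_R(s)\zeta_R(s+1)$ of Riemann's functional equation (derived from the Legendre duplication formula applied to the symmetric $\xi$-function), converts this tail into precisely the Mellin representation of $-\ln\phi(e^{-2\pi/r})$. Combining the three residues with the tail yields (\ref{Euler}).

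The main obstacle is the double pole at $s = 0$. One has to carry the Laurent expansions $\Gamma(s) = \tfrac{1}{s} - \gamma + O(s)$, $\zeta_R(s+1) = \tfrac{1}{s} + \gamma + O(s)$, $\zeta_R(s) = -\tfrac{1}{2} - \tfrac{s}{2}\ln(2\pi) + O(s^2)$, and $(2\pi r)^{-s} = 1 - s\ln(2\pi r) + O(s^2)$ through first order simultaneously; the Euler constants $\pm\gamma$ cancel against one another, and the $\ln(2\pi)$ produced by $\zeta_R'(0)$ cancels the $\ln(2\pi)$ hidden inside $\ln(2\pi r)$, isolating the asymmetric $\tfrac{1}{2}\ln r$ that becomes the $-\tfrac{1}{2}\ln r$ on the right of (\ref{Euler}). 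All other residues and the tail are manifestly symmetric in $r \leftrightarrow 1/r$ up to sign, so this is the sole delicate step of the calculation.
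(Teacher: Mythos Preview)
Your argument is correct. The identification with the modular transformation of the Dedekind eta function is exactly right, and the Mellin--Barnes derivation you sketch is the standard self-contained proof of that transformation law: the integral representation, the three residues at $s=1,0,-1$, and the functional-equation identity $\Gamma(-s)\zeta_R(-s)\zeta_R(1-s)=(2\pi)^{-2s}\Gamma(s)\zeta_R(s)\zeta_R(s+1)$ all check out, and your handling of the double pole at $s=0$ is accurate. Two small points: for the odd negative integers $s=-3,-5,\dots$ it is $\zeta_R(s+1)$ (not $\zeta_R(s)$) whose trivial zero cancels the $\Gamma$-pole, and strictly speaking the contour shift needs the exponential decay of $\Gamma(s)$ against the polynomial growth of $\zeta_R$ on vertical lines; both are routine.

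The paper's own proof is entirely different and worth noting. It does not appeal to number theory at all: instead it specializes the formula~(\ref{zeta2p1}) for $\zeta_2'(0)$, already derived earlier in the proof of the main theorem, to the flat cylinder $f\equiv R/\pi$ on an interval of length $L$. For that surface the eigenvalues $\lambda_{k,n}=(k\pi/R)^2+(n\pi/L)^2$ are manifestly symmetric under $R\leftrightarrow L$, so $\zeta_2'(0)(R,L)=\zeta_2'(0)(L,R)$; equating the two instances of~(\ref{zeta2p1}) and setting $r=L/R$ yields the identity in one line. In effect the paper bootstraps the eta functional equation out of the spectral symmetry of a rectangle, which is elegant and keeps the argument internal to the paper's own machinery. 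Your approach, by contrast, is independent of the surface-of-revolution computation and makes the connection to modular forms explicit; it would serve equally well as a standalone proof in any context, at the cost of importing the Riemann functional equation.
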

\begin{proof}
Consider the special case of \( f\equiv R/\pi \) on an interval of length $L.$ Then the eigenvalues are
$$\lambda_{k,n}=\left(\frac{k\pi}{R}\right)^2+\left(\frac{n\pi}{L}\right)^2.$$
Due to the symmetry between $R$ and $L$ in the spectrum, $\zeta_2(s)(R,L)=\zeta_2(s)(L,R)$. Therefore, by \eq{zeta2p1},
$$\zeta_2'(0)=-2\ln\phi \left(e^{-2\pi\frac{L}{R}}\right)+\frac{\pi}{6}\frac{L}{R}+\ln(2R)
=-2\ln\phi \left(e^{-2\pi\frac{R}{L}}\right)+\frac{\pi}{6}\frac{R}{L}+\ln2L.$$
Letting $r=L/R$, we obtain \eq{Euler}.
\end{proof}

\section{Discussion and conclusion}\label{discuss}

The geometric interpretation of the zeta function's residues and values can be found from the heat kernel expansion of the Laplacian on the surface of revolution;
Refs. \cite{gilk95b,kirs02b,vass03-388-279} gave extensive reviews of the heat kernel expansion. On a compact manifold, the heat kernel of the Laplacian, integrated over space, is defined as
$$ \theta(t) = \sum_{\lambda_k \in {\rm Spec}(\Delta)} e^{-t\lambda_k}. $$
For a two-dimensional manifold with boundary, denoted by $\Omega$, the heat kernel has the following expansion,
$$\theta(t)=\frac{C_{-1}}{t}+\frac{C_{-\half}}{\sqrt{t}}+C_0+C_\half\sqrt{t}+C_1t+\cdots.$$
The coefficients are integrations of geometric invariants over $\Omega$ and its boundary $\partial\Omega$ \cite{bran90-15-245}. For the surface of revolution, if the Dirichlet boundary condition is taken,
$$C_{-1}=\frac{|\Omega|}{4\pi},\ \ \ C_{-\half}=-\frac{|\partial\Omega|}{8\sqrt{\pi}},
\ \ \ C_0=\frac{\chi}{6},\ \ \ C_\half=\inv{256\sqrt{\pi}}\int_{\partial\Omega}[k_g^2(\sigma)-8k(\sigma)]d\sigma,$$
where $|\Omega|$ is the area of $\Omega$, $|\partial\Omega|$ is the total length of $\partial\Omega$, $\chi$ is Euler characteristic of $\Omega$, $k_g(\sigma)$ is the geodesic curvature of the boundary at $\sigma$, and $k(\sigma)$ is the Gaussian curvature of the surface at $\sigma$. For the surface of revolution, $\chi=0$.

The zeta function is related to the heat kernel by
$$\zeta(s)=\inv{\Gamma(s)}\int_0^\infty dt\theta(t)t^{s-1}.$$
Substituting the heat kernel expansion into the equation above, we have
\ben
\res\zeta(1) &=& C_{-1}=\frac{|\Omega|}{4\pi}=\inv{4\pi}\int_a^b 2\pi f(x)\sqrt{1+f'(x)^2}dx,\\
\res\zeta(\half) &=& \frac{C_{-\half}}{\sqrt{\pi}}=-\frac{|\partial\Omega|}{8\pi}=-\frac{2\pi f(a)+2\pi f(b)}{8\pi},\\
\zeta(0) &=& C_0=\frac{\chi}{6}=0,\\
\res\zeta(-\half) &=& -\frac{C_\half}{2\sqrt{\pi}}=-\inv{512\pi}\int_{\partial\Omega}[k_g^2(\sigma)-8k(\sigma)]d\sigma\\
&=&-\frac{2\pi f(a)}{512\pi}\left[\frac{f'^2(a)}{f^2(a)(1+f'^2(a))}+8\frac{f''(a)}{f(a)(1+f'^2(a))^2}\right]\\
&&-\frac{2\pi f(b)}{512\pi}\left[\frac{f'^2(b)}{f^2(b)(1+f'^2(b))}+8\frac{f''(b)}{f(b)(1+f'^2(b))^2}\right].
\een
They agree with \thm{zetaf} completely. It confirms the validity of the contour integral method extended to the surface of revolution. On the other hand,
the determinant of the Laplacian, calculated by the contour integral method, cannot be derived from the heat kernel expansion directly.

Our results can also be benchmarked against an exactly solvable model with continuous spectrum. Consider the Laplacian on an infinitely long planar stripe of width $L$.
Given the infinite volume of this configuration we need to introduce suitable zeta function densities. To that end, we compactify the infinitely long planar stripe temporarily to a
a finite stripe of width $2\pi R$ and impose periodic boundary conditions along that direction.
With Dirichlet boundary condition along the remaining direction, the spectrum is
$$\lambda_{k,n}=\left(\frac {k} R\right)^2 +\left(\frac{n\pi}{L}\right)^2,\ \ \ k\in\ZZ,\ \ \ n=1,2,3,\ldots.$$
The associated zeta function is
\begin{eqnarray}
\zeta (s) - \sum_{n=1}^\infty \sum_{k=-\infty}^\infty \left[ \left( \frac{ k} R \right)^2 + \left( \frac{n\pi} L \right)^2 \right]^{-s}.\nonumber
\end{eqnarray}
As $R\to\infty$, which is the transition to the Riemann integral, the relevant zeta function density is
\ben
\zeta_c(s)&:=&\lim_{R\to\infty} \frac{1} R \zeta (s) = \sum_{n=1}^\infty\int_{-\infty}^\infty\lambda_{nk}^{-s}dk\\
&=&2\zeta_R(2s-1)\int_0^\infty(x^2+1)^{-s}dx\left(\frac{L}{\pi}\right)^{2s-1}\\
&=&\zeta_R(2s-1)B(\half,s-\half)\left(\frac{L}{\pi}\right)^{2s-1}\\
&=&\sqrt{\pi}\zeta_R(2s-1)\frac{\Gamma(s-\half)}{\Gamma(s)}\left(\frac{L}{\pi}\right)^{2s-1}.
\een
In particular, we have
$$\res\zeta_c(1)=\hf{L},\ \ \res\zeta_c \left(\half \right)=-\half,\ \ \zeta_c(0)=0,\ \ \res\zeta_c \left(-\half \right)=0,$$
and
$$\zeta'_c(0)=\frac{\pi^2}{6L}.$$
The stripe can be regarded as a surface of revolution with $f(x)\equiv R$, and $R\to\infty$. For $f(x)\equiv R$, \thm{zetaf} gives
$$ \res\zeta(1)=\hf{LR},\ \ \res\zeta \left(\half \right)=-\hf{R},\ \ \zeta(0)=0,\ \ \res\zeta \left(-\half \right)=0,$$
and
$$ \zeta'(0) = -2\ln\phi \left(e^{-2\frac{\pi^2R}{L}}\right) + \frac{\pi^2R}{6L}. $$
One easily verifies, as the above computation shows, that these properties are recovered from
$$\zeta_c(s)=\lim_{R\to\infty}\frac{ \zeta (s)}{R}.$$

As a conclusion, we have extended the contour integral method to the surface of revolution. By the WKB expansion, we calculated the zeta function at several important points. The results agree with those obtained from the heat kernel expansion. We also computed the determinant of the Laplacian on such a surface. The WKB expansion allows the evaluation of the zeta function at more points,
however the expressions get increasingly more complicated. In the future, we can change the Dirichlet boundary condition to, say,
Robin boundary condition, and calculate the associated zeta function. We can also relax the condition that $f(x)>0$ by allowing $f(a)=0$ and/or $f(b)=0$.
This should allow the study of manifolds with cusps.\\[.2cm]

{\bf Acknowledgments}\\[.2cm]
KK is supported by the National Science Foundation Grant PHY-0757791.

\end{document}